\synctex=1
\documentclass[%
reprint,
superscriptaddress,
frontmatterverbose, 
preprintnumbers,
nofootinbib,
 amsmath,amssymb,
 aps,
 pra,
notitlepage
]{revtex4-1}

\usepackage{pdfpages}
\usepackage{pgffor} 

\usepackage{graphicx}
\usepackage{dcolumn}
\usepackage{bm}
\usepackage{hyperref}
\usepackage{subfigure}
\hypersetup{
colorlinks=true,
linkcolor=blue,
filecolor=blue,
citecolor=blue,  
urlcolor=black,
}

\usepackage{amsthm}
\usepackage{enumerate}
\usepackage{comment}
\usepackage{tikz}

\usepackage{url}

\DeclareMathOperator{\Tr}{Tr}
\newcommand{\E}{\mathbb{E}}

\def\>{\rangle}
\def\<{\langle}

\newcommand{\OTOC}{\rm{OTOC}}
\newcommand{\ent}{\rm{ent}}


\makeatletter
\AtBeginDocument{\let\LS@rot\@undefined}
\makeatother 

\newtheorem{theo}{Theorem}

\newtheorem{lemma}{Lemma}

\newtheorem{cor}{Corollary}
\newtheorem{defi}{Definition}

\newcommand{\nc}{\newcommand}
\nc{\rnc}{\renewcommand}
\nc\benum{\begin{enumerate}}
\nc\eenum{\end{enumerate}}
\nc\bit{\begin{itemize}}
\nc\eit{\end{itemize}}

\newcommand{\thmref}[1]{Theorem~\ref{thm:#1}}

\newcommand{\corref}[1]{Corollary~\ref{cor:#1}}

\begin{document}
\title{A Separation of Out-of-time-ordered Correlation and Entanglement}

\newcommand{\cC}{\mathcal{C}}
\author{Aram W. Harrow} \email{aram@mit.edu}
\author{Linghang Kong} \email{linghang@mit.edu}
\affiliation{Massachusetts Institute of Technology}
\author{Zi-Wen Liu} \email{zliu1@perimeterinstitute.ca}
\affiliation{Perimeter Institute for Theoretical Physics}
\author{Saeed Mehraban} \email{mehraban@mit.edu}
\affiliation{Massachusetts Institute of Technology}
\author{Peter W. Shor} \email{shor@math.mit.edu}
\affiliation{Massachusetts Institute of Technology}

\begin{abstract}
The out-of-time-ordered correlation (OTOC) and entanglement are two physically motivated and widely used probes of the ``scrambling'' of quantum information, a phenomenon that has drawn great interest recently in quantum gravity and many-body physics.
We argue that the corresponding notions of scrambling can be fundamentally different, by proving an asymptotic separation between the time scales of the saturation of OTOC and that of entanglement entropy
in a random quantum circuit model defined on graphs with a tight
bottleneck, such as tree graphs. 
Our result counters the intuition that a random quantum circuit mixes in time proportional to the diameter of the underlying graph of interactions. 
It also provides a more rigorous justification for an argument in our
previous work~\cite{Shor18}, that black holes may be slow information
scramblers, which in turn relates to the black hole information
problem.
The bounds we obtained for OTOC are interesting in their own right in that they generalize previous studies of OTOC on lattices to the geometries on graphs in a rigorous and general fashion.
 
\end{abstract}

\preprint{MIT-CTP/5131}
  
\maketitle


\section{Introduction and overview}


The ``scrambling'' of quantum information is a phenomenon of fundamental importance, deeply connected to many important research topics in physics, such as black holes \cite{HP07,SS08,MSS16,SS14} and many-body chaos \cite{NH15,PH10}. 
In recent years, a great amount of research effort has been devoted to the detection and characterization of scrambling. 
The so-called out-of-time-ordered correlation (OTOC) \cite{LO69} is a commonly used measure of quantum chaos and scrambling. A variant based on commutators (also known as the OTO commutator) is given by
\begin{equation}
 C(t) = \frac{1}{2}\<[O_1(x,0), O_2(y,t)]^\dagger [O_1(x,0),O_2(y,t)]\>, \label{eq:otoc}
\end{equation}
 where $O_1(x,0)$ is an operator acting on site $x$, and $O_2(y,t)$ is a Heisenberg operator at time $t$ that only acts on $y$ at time 0, i.e.\ $O_2(y,t) = U^\dagger(t) O_2(y,0) U(t)$ where $U(t)$ is the unitary for the evolution from time 0 to $t$. The average is taken w.r.t.~the thermal state at some temperature, which we take to be infinite in this work. Intuitively speaking, it characterizes parameters like sensitivity to initial conditions via the spread of local operators.  
Also notice that the scrambling phenomena exhibit a truly quantum nature---the state of the entire system remains pure during the unitary evolution (although it is effectively randomized), thus no information is really lost; the generation of global entanglement leads to the scrambling of initially localized quantum information, spreading and hiding it from observers that only have access to part of the system.  
This observation leads to another fundamental probe of a stronger form of scrambling, namely the entanglement between parts of the system \cite{SS08,LSH+13,HQRY16,LLZZ18} (which measures the equivalent effect as the tripartite information \cite{HQRY16} in the case of unitary dynamics; see \cite{LLZZ18} for more detailed discussions).

To understand and characterize the dynamical behaviors of scrambling systems, several
explicit models have been proposed and investigated, such as the Sachdev-Ye-Kitaev (SYK)
model~\cite{PhysRevLett.70.3339,k}.  Another leading approach is the random quantum
circuit model \cite{HP07,HQRY16, NVH18, NVH17, PhysRevX.8.031057}, capturing the key
kinematic feature of chaos that the evolution appears to be random, and the
locality of physical interactions. 
In these well-studied physical scrambling models, the saturation of OTOC and that of entanglement are expected to occur at a similar time scale \cite{NVH18, NVH17,HQRY16}.
More generally, one could consider the dynamics of many small quantum systems (say qubits) connected according to some graph \cite{HK06,NS06}, with random unitary gates being applied to each edge.  Suppose that we apply gates in a random order such that on average each edge has one gate applied to it per unit time. 
A natural conjecture here, which would be compatible with all previous results, is that
the scrambling time for a constant degree graph is proportional to its diameter, i.e.\ the maximum distance between any two vertices.  This would correspond to information traveling through the graph at a linear velocity and is assumed implicitly in previous works.
However, no proof exists, outside of the special case of Euclidean lattices in a fixed number of dimensions.  Even for Euclidean lattices in more than one dimension, this result was only recently proven \cite{HM18}.


 Our main results are the following.
    The first one (\thmref{otoc-mixing} and \thmref{otoc-lb}) shows that for arbitrary graphs with sufficiently low degree, the OTOC saturation time scales linearly in the graph diameter.  Here by low degree, we mean $d^2 \ge z$ where $d$ is the dimension of the quantum system and $z$ is the degree of the graph.
  On the other hand, we use bounds on entanglement growth to show that the time needed to establish substantial entanglement between parts of the system scale at least as the number of vertices and thus could be longer than the OTOC saturation time, for graphs with bottlenecks (see \corref{ent}).  Such graphs include e.g.~binary trees, which we explicitly analyze in this paper, and discretizations of hyperbolic space around black holes, originally proposed by \cite{Shor18}, which are expected to exhibit similar behaviors (as argued below). In other words, we have established an \emph{asymptotic} separation between the time scales of OTOC and entanglement saturation.  Refs.~\cite{BGL18,Lu19} studied scrambling on certain peculiar graphs via a Hamiltonian model, but the relations between OTOC and entanglement were not fully understood and the physical correspondences were not clear; here we rigorously proved the separation in a general setting and studied the implications to physics. 
  

Our results have the following major implications:

i) {\em Scrambling in non-Euclidean geometries.}
Existing work studied scrambling mostly on Euclidean lattices \cite{NVH18, NVH17,JHN18}. The general assumption is that after time $t$, a localized perturbation will affect everything within some ball of radius $v_{\text{butterfly}}t$, where $v_{\text{butterfly}}$ is known as the ``butterfly velocity'' which characterizes the speed of information spreading. However, this has not been proved and previous works only gave heuristic arguments for it that included uncontrolled approximations.
For the random circuit models defined on general graphs, we find that if the local dimension
is large relative to the graph degree then indeed there is a linear butterfly velocity.
We also find apparent counter-examples which suggest that linear butterfly velocity no longer holds in high-degree graphs. Some of these examples are not rigorously analyzed but we present a heuristic argument suggesting that the scrambling time for some families of graphs should grow more rapidly or more slowly than the diameter of the graphs. 
    
    ii) {\em Black hole information scrambling.}
Our results can be regarded as a more rigorous argument that fleshes out the idea of a recent paper by one of the authors \cite{Shor18}, which concerns whether it is possible for the fast scrambling conjecture of black holes~\cite{SS08}
to hold if one assumes that the causality structure of general relativity 
holds around a black hole, and if the medium by which the information is 
scrambled is Hawking radiation. In the model of \cite{Shor18}, the space around the black hole is divided 
into cells, each of which contains a constant number of bits of Hawking 
radiation. 
It then gives arguments for why the Hawking radiation is not 
adequate for fast scrambling if the entanglement definition of scrambling 
is used. 
The cell structure around the black hole looks like a patch of a cellulation of hyperbolic geometry, where the cells on the event horizon 
are the boundary of this patch.  
The tree graph we shall consider captures a key feature
  of this geometry: the leaves lie on the event horizon, and the density of nodes decreases as one moves outwards radially.
As the assumptions essentially suggest that information is processed via local interactions of the Hawking radiation, we may consider a random circuit defined on the underlying graph to be a toy model that captures key features of the black hole scrambling process. Our mathematical results then indicate that the scrambling time scales given by entanglement and OTOC are fundamentally different.
Another way to interpret our model is that the information ``wavefront'' could reach the farthest side rather quickly since there exists short paths, but it takes a longer time, which scales with the number of degrees of freedom, to establish truly global entanglement. This is consistent with
recent holographic calculations (see
e.g.~\cite{LiuSuh14:tsunami,CENSX19}), which suggest that the
entanglement entropy grows roughly linearly after a quench in chaotic systems. 

We would also like to remark upon the task of recovering quantum information falling into the black hole from Hawking radiation (commonly known as Hayden-Preskill decoding \cite{HP07}), which plays central roles in recent studies of the black hole information problem. Yoshida and Kitaev recently proposed an explicit protocol \cite{2017arXiv171003363Y} whose decoding fidelity is at least the order of $1/{d_A^2 (1-C(t))}$,
where $d_A$ is the Hilbert space dimension of the input message.
Here $C(t)$ takes the form of Eq.~\eqref{eq:otoc} and considers $O_1$ and $O_2$ averaged
over all Pauli operators on the infalling system and Hawking radiation respectively; see
Sections 2--4 of \cite{2017arXiv171003363Y} for details.
By simple calculations one can see that our results imply a possible
time window in which the decoding could be achieved with high fidelity
without substantial entanglement when the infalling quantum state
is sufficiently small compared to the black hole. However, it
appears that adding a small number of qubits to a Schwarzschild black hole can only
be done by photons whose wavelength is comparable to the size of the
black hole. It does not seem surprising that the information carried
by such photons can be extracted by a black hole quickly; when the
information is absorbed by the black hole, it is already spread out over
the entire black hole, and so does not need to migrate from a localized
region to a state where it is delocalized on the black hole.

iii) {\em Inequivalence of convergence to 2-designs in different measures.}
The speed of convergence of a random circuit to a 2-design (distributions that
approximately agree with the Haar measure up to the first two moments, which have found
many important applications as an efficient approximation to Haar randomness \cite{L10}) has been the subject of a large amount of research.   In
particular, \cite{HL09, HM18, BHH16, BF12, BF15} show that the speed of convergence
depends on the graph of interactions, and suggest that it should be proportional to the
diameter.  Note that 2-designs are very powerful measures of
convergence, in the sense that a distribution being close to a 2-design implies that the
distribution has mixed with respect to not only OTOC but also von Neumann and R\'enyi-2
entanglement entropies \cite{HQRY16,PhysRevLett.120.130502}, and other important signatures
of information scrambling such as decoupling \cite{1367-2630-15-5-053022}.  
Our work provides several
examples where a random circuit approximates the OTOC but not the entanglement properties of a 2-design, and therefore implies that a strong approximation of 2-designs (in terms of e.g.~the frame operator \cite{LLZZ18}) may not be achieved in time proportional to the diameter.


\section{Models and Notation}
\label{sec:mc}


Let $G$ be a graph with $V$ vertices and $E$ edges. The model we study consists of a graph
with a $d$-dimensional Hilbert space associated with each vertex of $G$. Each edge has
Haar-random unitary gates applied to qudits on its endpoints according to a Poisson process
with rate 1, meaning a Poisson distribution such that $k$ unitaries are applied in time $t$ with probability $t^ke^{-t}/k!$). The mixing times for OTOC and entanglement,
$\tau_{\OTOC}^{(x,y)}$ and $\tau_{\ent}^{(A)}$, are defined as follows.
\begin{defi}
  $\tau_{\OTOC}^{(x,y)}$ (resp.~$\tau_{\ent}^{(A)}$) is defined to be the minimum amount
  of time needed for OTOC between vertices $x$ and $y$ (resp.~the entanglement entropy
  between $A$ and the complement of $A$) to become at least a constant fraction of its equilibrium
  value. In this work we take the constant to be $1/(d^2+1)$.
\end{defi}
Here we expect that qualitatively similar behavior will hold with $1/(d^2+1)$ replaced by any
constant strictly between 0 and 1.  We will study how $\tau_{\OTOC}$ and $\tau_{\ent}$
scale with parameters such as local dimension, degree, and number of vertices.  

We study the pair of $(x,y)$ that has largest
$\tau_{\OTOC}^{(x,y)}$, and the set $A$ that has largest $\tau_{\ent}^{(A)}$,
as they could best characterize OTOC and entanglement properties for $G$.  
  
Instead of studying this model directly we can consider the process in which a random edge is picked every $1/E$ time units. This is because in our Poisson process model, each edge is equally likely to be picked.  The number of unitaries applied within time $t$ is of order $Et$ (see Appendix A), so the two models above are equivalent up to a constant factor.


\section{OTOC}
\label{sec:otoc}

To analyze the saturation time of OTOC, we describe the process of operator spreading as a Markov chain.
Consider an arbitrary Pauli operator $\sigma_{\vec p}$ acting on $n$ $d$-dimensional qudits, $\vec p \in \{0,\ldots,d^2-1\}^n$, and apply some unitary $U$ to it. We expand the resulting operator on Pauli basis and have
\[
 U^\dagger \sigma_{\vec p} U = \sum_{\vec q} \alpha_{\vec q} \sigma_{\vec q}, \quad
 \alpha_{\vec q} \equiv \frac{1}{d^n}\Tr[U^\dagger\sigma_{\vec p} U \sigma_{\vec q}^\dagger].
\]
The expected value of the cross term for $\alpha_{\vec q}$ averaged over the distribution of $U$ would be
\[
 \E_U\alpha_{\vec q} \alpha_{\vec {q'}}^* = \frac{1}{d^{2n}} \E_U\Tr[U^{\dagger\otimes 2} (\sigma_{\vec p} \otimes \sigma_{\vec p}^\dagger) U^{\otimes 2}(\sigma_{\vec q}^\dagger \otimes \sigma_{\vec {q'}})].
\]

According to the construction of random circuit, this is zero for $\vec q \not= \vec {q'}$ for $U$ being the unitary in a single step. Therefore in each step the values of $\overline{\alpha_{\vec q} \alpha_{\vec q}^*}$ undergo linear transformation, which we can interpret as a distribution because they are positive and sum to 1.

If we start from a Pauli operator located at a single vertex $x$, on each vertex all non-identity Pauli operators will have the same probability as long as $x$ has been touched at least once in the process. So we only care if the operator on a vertex is identity (I) or non-identity (N). And the norm of the time-evolved operator with a Pauli operator $P$ on some vertex $y$ would be proportional to the probability of having nonzero Pauli operator on that site, and the factor of proportionality would be
\[
 \frac{1}{d^2-1}\sum_{i=1}^{d^2-1}\frac{1}{2d} \Tr([\sigma_i,P]^\dagger [\sigma_i,P]) = \frac{d^2}{d^2-1},
\]
which is just the commutator averaged over all non-identity Pauli operators.

In summary, the object we will study is the OTOC between Pauli operator on vertex $y$ and time-evolved Pauli operator on vertex $x$ after $T$ steps of random circuits on graph $G$, which equals to $\frac{d^2}{d^2-1}$ times the probability of having ``N'' on vertex $y$ after $T$ steps in the Markov chain $M_0$ defined below.
The state space of $M_0$ is the set of all the configurations in which each vertex of $G$ is assigned a label ``N'' or ``I''. The initial state of $M_0$ has ``N'' assigned to vertex $x$ and ``I'' assigned to all other vertices. The update rule is that in each step a uniformly random edge is picked and the labels on the two corresponding vertices are updated. ``II'' remains ``II'', and otherwise they has a probability of $\frac{d^2-1}{d^4-1} = \frac{1}{d^2+1}$ for becoming ``IN'' or ``NI'' each, and $\frac{d^2-1}{d^2+1}$ for becoming ``NN'' \cite{HL09}.

Now we prove an upper bound for the OTOC saturation time. For illustration we present an outline of the proof here; full proof could be found in Appendix C. Note that $O(\alpha)$ that appears here and in the following is represents a quantity that scales asymptotically as $\alpha$, i.e.~$\geq c_1 \alpha$ and $\leq c_2\alpha$ for constants $0<c_1<c_2$.
\begin{theo}[OTOC upper bound]
Let $G$ be a graph with $V$ vertices and $E$ edges, and suppose the degree for each vertex at most $d^2$, where $d$ is the Hilbert space dimension for each vertex. Then for any pair of vertices $x$ and $y$, $\tau_{\OTOC}^{(x,y)} = O (D(x,y))$ with high probability, where $D(x,y)$ is the distance between $x$ and $y$. The probability of failure is exponentially small in $D(x,y)$. As a consequence the perfect binary tree has $\tau_{\OTOC}^{(x,y)} = O (\ln V)$, where $x$ and $y$ are the farthest pair of vertices.
\label{thm:otoc-mixing}
\end{theo}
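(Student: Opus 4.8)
It suffices to work entirely with the Markov chain $M_0$, since the OTOC between $x$ and $y$ at time $t$ equals $\frac{d^2}{d^2-1}$ times the probability that $y$ is labeled ``N'' at time $t$, and the saturation threshold is a fixed fraction of the (constant, $d$-dependent) equilibrium value of this quantity. The plan is to prove that, with probability $1-e^{-\Omega(D)}$ over the random sequence of gate applications, the operator's support reaches $y$ within time $O(D)$, where $D:=D(x,y)$; once the support has passed through $y$ its neighbourhood keeps being refreshed, so after $O(D)$ further time the probability that $y$ is ``N'' relaxes to within $e^{-\Omega(D)}$ of its equilibrium value and thus clears the threshold, giving $\tau_{\OTOC}^{(x,y)}=O(D)$. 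The case $D=O(1)$ is immediate, since in $O(1)$ extra units of time the shortest-path edge incident to $y$ fires and turns $y$ into ``N'' with constant probability; so assume $D$ is large.

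For the main estimate I would fix a shortest path $P=(v_0=x,v_1,\dots,v_D=y)$ and follow the \emph{front} $F(t)=\max\{i:v_i\text{ is N at }t\}$, showing it has strictly positive drift exactly under the hypothesis $z\le d^2$. Working in the continuous-time picture where each edge fires at rate $1$: when the front sits at $v_i$ the forward edge $(v_i,v_{i+1})$ fires at rate $1$, and because $v_{i+1}$ is then ``I'' the update makes $v_{i+1}$ equal ``N'' with probability $p+(1-2p)=1-p$ with $p=\frac{1}{d^2+1}$, so $F$ moves up at rate at least $1-p$. It can move down only if the firing of the backward path edge or of one of the off-path edges at $v_i$ (of which there are at most $z-1$ in total) turns $v_i$ into ``I'', which happens with probability at most $p$ per firing, so $F$ moves down at rate at most $(z-1)p$. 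The net drift is therefore at least $(1-p)-(z-1)p=1-zp\ge 1-\frac{d^2}{d^2+1}>0$, so in expectation the front travels distance $D$ in time $O(D)$; it remains to turn this into a concentration bound $\Pr[F(CD)<D]\le e^{-\Omega(D)}$ for a suitable $C=C(d)$, together with the routine Chernoff bound that each of the $D$ edges of $P$ fires $\Theta(D)$ times within time $O(D)$.

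The main obstacle is this concentration step: a ``leakage'' event, in which ``N'' hops from $v_i$ onto an adjacent off-path vertex, can drop the front by more than one, so the downward increments of $F$ are not bounded. I would control this by observing that a downward jump of size $k$ requires $v_{i-1},\dots,v_{i-k+1}$ all to be ``I'', and that the stretch of $P$ just behind the front is re-infected from its neighbours fast enough that each of these vertices is ``I'' with probability only $O(p)$, so the downward jumps have an exponential tail; this is itself a short local mixing/coupling estimate for $M_0$ restricted to an $O(D)$-sized window around $P$, and the same estimate shows that after the front passes $y$ the marginal at $y$ equilibrates on an $O(1)$ time scale. Given $+1$ upward increments, exponentially-tailed downward increments, and positive drift, a Bernstein-type large-deviation bound applied to the process stopped at $D$ (with a little care near the boundary $F=0$ during an initial phase) yields the desired $e^{-\Omega(D)}$ failure probability. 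The full argument is deferred to Appendix~C.

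The corollary is then immediate: the perfect binary tree of depth $h$ has $V=2^{h+1}-1$ vertices, maximum degree $z=3$ (so $z\le d^2$ already for qubits), and the farthest pair of vertices is a pair of leaves whose least common ancestor is the root, at distance $D=2h=\Theta(\ln V)$; hence $\tau_{\OTOC}^{(x,y)}=O(\ln V)$ with failure probability $e^{-\Omega(\ln V)}=V^{-\Omega(1)}$, i.e.\ with high probability.
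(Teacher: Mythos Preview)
Your two-step skeleton---(i) the ``N'' front reaches $y$ in $O(D)$ time with failure probability $e^{-\Omega(D)}$, (ii) afterwards the probability that $y$ carries ``N'' stays bounded below by a constant---matches the paper's decomposition into Lemmas~\ref{thm:otoc} and~\ref{thm:const}. The route you take inside each step, however, differs from the paper's and introduces complications that the paper avoids.

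For step (i), the paper does \emph{not} track the front $F(t)$ in the full chain $M_0$. Instead it couples $M_0$ monotonically to a dominated chain $M$ that after every update erases all ``N'' labels except the single one closest to $y$. In $M$ the distance of that unique ``N'' to $y$ is then a genuine nearest-neighbour biased random walk (forward with probability $1-p$, backward with probability at most $(z-1)p$, in units of one incident-edge firing), so the hitting time and its exponential concentration follow from standard one-dimensional random-walk estimates. Your front $F(t)$ in $M_0$, by contrast, can drop by an arbitrary amount when an off-path edge at $v_i$ flips $v_i$ to ``I''; your drift computation $1-zp$ is only the net rate of up- versus down-\emph{events}, not the expected increment of $F$, and the latter is what you need. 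Your proposed fix---that $v_{i-1},\dots,v_{i-k+1}$ are each ``I'' with probability $O(p)$---is heuristically right but circular as stated: that $O(p)$ bound relies on their neighbours being ``N'', which is essentially the conclusion you are after. It can be made rigorous, but the single-``N'' coupling removes the issue in one stroke.

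For step (ii), the paper again works in the single-``N'' chain $M$: starting from ``N'' at $y$, one argues that $\Pr[\text{``N'' at }y]$ is monotone non-increasing in time and hence is always at least its equilibrium value, which is shown to be a $d$-dependent constant; the monotone coupling then transfers this to $M_0$. This is more direct than your local-equilibration claim (``the marginal at $y$ equilibrates on an $O(1)$ time scale''), which would need a separate mixing argument for a neighbourhood of $y$ and is not obviously true without already knowing that the neighbours of $y$ are typically ``N''.

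In short: your outline is sound, but the paper's single-``N'' monotone coupling is the key simplification you are missing; it turns both steps into clean one-dimensional statements and eliminates the unbounded-jump and local-mixing issues you flagged.
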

\begin{proof}
 As explained earlier, the OTOC saturation time corresponds to the number of steps needed for $M_0$ to have constant probability of having a label ``N'' on $y$.
 We will first prove Lemma~\ref{thm:otoc}, which states that with probability $1-e^{-O(D(x,y))}$ the vertex $y$ gets hit by a label ``N'' within order of $E\cdot D(x,y)$ steps. As shown in Appendix~A, this needs order of $D(x,y)$ time units with high probability. Then we will show in Lemma~\ref{thm:const} that after this happens, the probability for having an ``N'' on $y$ remains constant.
\end{proof}

\begin{lemma}
 \label{thm:otoc}
 Suppose that $G$ is a graph with the degree for each vertex being at most $d^2$. For any pair of vertices $x$ and $y$ with distance $D(x,y)$, the expected number of steps for $y$ to be labeled ``N'' is of order $E\cdot D(x,y)$ in $M_0$ starting from $x$. Besides, with high probability the vertex $y$ gets labeled ``N'' in time of order $E\cdot D(x,y)$. The probability of failure is exponentially small in $D(x,y)$.
\end{lemma}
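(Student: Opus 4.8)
By the reduction in the previous paragraphs it is enough to bound the first step $\tau$ at which the Markov chain $M_0$, started with label ``N'' only on $x$, puts ``N'' on $y$; write $S_t\subseteq V$ for the set of vertices carrying ``N'' after $t$ steps (so $S_0=\{x\}$, $\tau=\min\{t:y\in S_t\}$) and $\mathcal{F}_t$ for the $\sigma$-algebra generated by the first $t$ steps. The plan is to follow a single ``frontier token'' that travels toward $y$ while always occupying a vertex of $S_t$, and to show that the token's distance to $y$ shrinks like a random walk with a strictly positive drift. Fix, for every $v\neq y$, a neighbour $\mathrm{nxt}(v)$ with $D(\mathrm{nxt}(v),y)=D(v,y)-1$, and call the edge $(v,\mathrm{nxt}(v))$ the \emph{forward edge} at $v$. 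Place the token on $x$. At each step of $M_0$: if the picked edge is the forward edge at the token's vertex $v$ and the update leaves ``N'' on $\mathrm{nxt}(v)$, move the token to $\mathrm{nxt}(v)$; otherwise, if the picked edge is incident to $v$ and the update turns $v$ into ``I'', move the token to the other endpoint of that edge; otherwise do nothing. Since a non-``II'' pair is never updated to ``II'', whenever $v$ turns into ``I'' the other endpoint of the responsible edge carries ``N''; hence the invariant ``the token occupies a vertex of $S_t$'' is preserved, and $\mathrm{nxt}(v)$ also carries ``N'' after a forward move. Put $X_t:=D(x,y)-D(\text{token},y)$, so $X_0=0$, $|X_{t+1}-X_t|\le 1$, and $X_t=D(x,y)$ forces $y\in S_t$.

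I would then read off the transition probabilities of $X_t$. When the forward edge at $v$ is picked the pair is ``NI'' or ``NN'', and in both cases $\mathrm{nxt}(v)$ ends up labelled ``N'' with probability $\tfrac1{d^2+1}+\tfrac{d^2-1}{d^2+1}=\tfrac{d^2}{d^2+1}$, so $\Pr[X_{t+1}-X_t=+1\mid\mathcal{F}_t]\ge\tfrac1E\cdot\tfrac{d^2}{d^2+1}$. A step $-1$ requires one of the at most $\deg(v)-1\le d^2-1$ non-forward edges at $v$ to be picked together with $v$ flipping to ``I'', which has probability at most $\tfrac1{d^2+1}$ for each such edge, so $\Pr[X_{t+1}-X_t=-1\mid\mathcal{F}_t]\le(d^2-1)\cdot\tfrac1E\cdot\tfrac1{d^2+1}$. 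This is exactly where the hypothesis $\deg\le d^2$ is used: the forward weight $d^2$ strictly beats the total backward weight $d^2-1$, so $\E[X_{t+1}-X_t\mid\mathcal{F}_t]\ge\tfrac1{E(d^2+1)}>0$ while $X_t<D(x,y)$. Hence $D(x,y)-X_{t}+\tfrac{t}{E(d^2+1)}$, stopped at $\tau$, is a nonnegative supermartingale, and optional stopping gives $\E[\tau]\le E(d^2+1)\,D(x,y)=O(E\cdot D(x,y))$ for constant $d$. (The matching $\Omega(E\cdot D(x,y))$ is the easy direction: $D(S_t,y)$ decreases by at most one per step, while an edge on a fixed geodesic from $x$ to $y$ is picked only once per $\Theta(E)$ steps --- for a tree, where the geodesic is unique, this bound is immediate.)

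For the exponentially small failure probability I would pass to the subsequence of steps at which $X_t$ actually moves. By the two inequalities above, at each such move the increment equals $+1$ with conditional probability at least $\rho:=\tfrac{d^2}{2d^2-1}$, a constant bounded away from $1/2$; since these move-increments form an adapted sequence they stochastically dominate i.i.d.\ $\pm1$ variables of bias $\rho$, and a Chernoff bound shows that the corresponding walk first reaches level $D(x,y)$ within $O(D(x,y))$ moves except with probability $e^{-\Omega(D(x,y))}$. Independently, every step of $M_0$ is a ``move'' with probability at least $\tfrac1E\cdot\tfrac{d^2}{d^2+1}$, so within $T=O(E\cdot D(x,y))$ steps the required $O(D(x,y))$ moves have occurred except with probability $e^{-\Omega(D(x,y))}$, again by Chernoff. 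Taking a union bound, $\Pr[\tau>T]\le e^{-\Omega(D(x,y))}$ for a suitable $T=O(E\cdot D(x,y))$, which together with the expectation bound is the assertion of the lemma.

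The main obstacle is that $S_t$ is \emph{not} monotone: a vertex labelled ``N'' can revert to ``I'', so there is no growing cluster to track directly. The token device is what handles this --- a reverting vertex always passes the token to a neighbour that is guaranteed to carry ``N'' --- and $\deg\le d^2$ is precisely the condition that makes the token's forward drift outweigh the backward drift caused by these reversions. A secondary, more technical point is that one must work with the subsequence of genuine moves rather than with $D(S_t,y)$ directly: a direct Azuma estimate on $D(S_t,y)$ would only give failure probability $e^{-\Omega(D/E)}$, which is vacuous precisely in the regime of interest (e.g.\ binary trees, where $D=\Theta(\ln V)$ and $E=\Theta(V)$), whereas on the move-subsequence the drift is an honest positive constant and one gets the required $e^{-\Omega(D)}$.
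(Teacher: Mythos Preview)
Your argument is correct and is essentially the paper's approach. Both reduce the problem to a single ``N'' performing a biased random walk toward $y$ --- the paper by passing to an auxiliary chain $M$ that after each $M_0$-step erases every ``N'' except the one closest to $y$ and then coupling $M$ to $M_0$, you by planting a token inside $M_0$ with explicit move rules --- and in either version the crux is the identical drift computation (forward weight $d^2/(d^2+1)$ on the single forward edge versus total backward weight at most $(\deg(v)-1)/(d^2+1)\le (d^2-1)/(d^2+1)$), followed by the standard biased-random-walk tail bound.
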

\begin{proof}
  (Sketch) We will first construct a Markov chain $M$ which has the same initial state as $M_0$, and in each step the update rule of $M$ is applied, followed by changing all ``N'' into ``I'' except the one closest to vertex $y$. By a simple coupling argument the number of steps needed for $y$ to get an ``N'' in $M_0$ is lower bounded by that in $M$. The distance between the vertex with label ``N'' and vertex $y$ in Markov chain $M_0$ can be described by a biased random walk, from which we can obtain the desired bound. More details could be found in Appendix~C.
\end{proof}

\begin{lemma}
 \label{thm:const}
 After a label ``N'' reaches the target vertex $y$, the probability for having an ``N'' on $y$ will remain order one. 
\end{lemma}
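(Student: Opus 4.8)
The plan is to reduce to a worst-case starting configuration and then follow the graph distance from $y$ to the nearest ``N,'' which I will argue is pushed back towards $y$ by a bias that comes precisely from the degree hypothesis. The first ingredient is that $M_0$ is monotone for the coordinatewise order in which ``N'' dominates ``I'': coupling two configurations by always selecting the same edge at each step, a pair of edge‑labels that starts ordered stays ordered after the update — the point being that the three non‑``II'' labelings all carry the \emph{same} transition law, so two of them can be coupled to coincide, while ``II'' lies below everything. Applying the strong Markov property at the first time $\tau$ that $y$ carries an ``N,'' the configuration at time $\tau$ dominates the configuration with a single ``N'' at $y$, and ``$y$ is labelled N'' is an increasing event; hence it suffices to bound from below, \emph{uniformly in $s$}, the probability that $y$ is labelled ``N'' after $s$ steps of the chain started from a single ``N'' at $y$.

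For that chain let $R(s)$ be the distance from $y$ to the closest ``N'' at time $s$, so $R(0)=0$ and the event ``$y$ labelled N'' is $\{R(s)=0\}$. I would verify three facts. (i) $|R(s+1)-R(s)|\le 1$ always: a new ``N'' can appear only adjacent to an old one, so $R$ cannot drop by two, and since one gate touches two sites while ``NN'' never becomes ``II,'' $R$ cannot rise by two. (ii) If $R(s)=r\ge 1$, then — writing $u$ for the (necessarily unique, in the only case where $R$ can increase) closest ``N'' and $a\ge 1$, $b$ for the numbers of neighbours of $u$ at distances $r-1$, $r+1$ — one gets $\Pr[R\downarrow\mid\mathcal{F}_s]=\frac{ad^2}{E(d^2+1)}\ge\frac{d^2}{E(d^2+1)}$ and $\Pr[R\uparrow\mid\mathcal{F}_s]=\frac{b}{E(d^2+1)}\le\frac{d^2-1}{E(d^2+1)}$, where $a+b\le\deg(u)\le d^2$ is exactly where the hypothesis ``degree $\le d^2$'' is used. (iii) If $R(s)=0$ then $\Pr[R\uparrow\mid\mathcal{F}_s]=\frac{\deg(y)}{E(d^2+1)}\le\frac{d^2}{E(d^2+1)}$.

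Using (i)--(iii) I would build, step by step, a coupling keeping $R(s)\le\hat R(s)$ with the birth--death chain $\hat R$ on $\{0,1,2,\dots\}$ started at $0$ that steps up with probability $\frac{d^2}{E(d^2+1)}$ from state $0$ and $\frac{d^2-1}{E(d^2+1)}$ from each state $\ge 1$, and down with probability $\frac{d^2}{E(d^2+1)}$ from each state $\ge 1$; the nesting/disjointifying of up‑ and down‑moves that makes the coupling go through needs only $\frac{2d^2}{E(d^2+1)}\le 1$, which holds unless $G$ is a single edge (a case done by hand). Its stationary law is $\hat\pi(k)=\hat\pi(0)\big(\frac{d^2-1}{d^2}\big)^{k-1}$ for $k\ge 1$, so $\hat\pi(0)=\frac1{d^2+1}$; and, being a birth--death chain started at its minimal state, $\hat R$ has $\Pr[\hat R(s)=0]$ non‑increasing in $s$, hence $\ge\hat\pi(0)$ for every $s$. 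Therefore the probability that $y$ is labelled ``N'' after $s$ steps equals $\Pr[R(s)=0]\ge\Pr[\hat R(s)=0]\ge\frac1{d^2+1}$ for all $s$, the asserted order‑one lower bound; together with Lemma~\ref{thm:otoc} this completes Theorem~\ref{thm:otoc-mixing}.

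The step I expect to be the main obstacle is (ii): checking, uniformly over the genuinely non‑Markovian history $\mathcal{F}_s$, that the nearest‑``N'' distance really has the stated downward drift requires a careful accounting of how a single gate can move or erase the inner boundary of the ``N''‑cluster, and it is the one place where the hypothesis $d^2\ge z$ is invoked. A secondary nuisance is the reflecting barrier at $R=0$: there the rate of injection into $\{1,2,\dots\}$ exceeds the up‑rate from positive states, which is why $\hat R$ must be given a slightly larger up‑probability at $0$ and why one must check the coupling still survives near the barrier.
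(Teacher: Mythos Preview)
Your proposal is correct and follows essentially the same route as the paper's proof (Appendix~D): both reduce to a biased birth--death walk on the distance from $y$, compute its stationary mass at $0$ (yielding the constant $1/(d^2+1)$), and use monotonicity from the minimal state to get the uniform-in-$s$ lower bound. The only cosmetic difference is that the paper phrases the reduction via the modified single-``N'' chain $M$ of Lemma~\ref{thm:otoc} (keep only the ``N'' closest to $y$), whereas you stay in $M_0$ and couple the non-Markovian $R(s)$ directly to $\hat R$; both reductions are equivalent once one notes, as you do, that $R$ can increase only when the nearest ``N'' is unique.
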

\begin{proof}
 (Sketch) We again consider the modified chain which only keeps one label ``N'' after each step. We will show that vertex $y$ has constant probability of having label ``N'' in the equilibrium distribution. This probability is monotonically non-increasing as a function of the number of steps, so the probability is order one in any step. More details could be found in Appendix~D.
\end{proof}

Theorem~\ref{thm:otoc-mixing} states that the number of steps needed for OTOC saturation in a low-degree graph is at most of order $E\cdot D(x,y)$. However, we expect that in a graph with high degree, the number could be much larger.  Some intuitions are given in Appendix~E. 

Besides this upper bound we also derive a lower bound for OTOC saturation.
\begin{theo}[OTOC lower bound]
Let $G$ be a graph with $V$ vertices and $E$ edges, and suppose the degree for each vertex is $O(1)$. Then for any pair of vertices $x$ and $y$, $\tau_{\OTOC}^{(x,y)}$ is at least of order $D(x,y)$ with high probability, where $D(x,y)$ is the distance between $x$ and $y$. The probability of failure is exponentially small in $D(x,y)$.
\label{thm:otoc-lb}
\end{theo}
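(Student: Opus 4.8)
\emph{Proof proposal.} The plan is to prove a light-cone (Lieb--Robinson-type) bound directly for the Markov chain $M_0$. Recall that the OTOC between $x$ and $y$ at time $t$ equals $\frac{d^2}{d^2-1}$ times the probability that $y$ carries the label ``N'' in $M_0$ after the corresponding number of steps; in particular it is at most $\frac{d^2}{d^2-1}\Pr[\tau_y\le t]$, where $\tau_y$ is the random first time that $y$ is ever labeled ``N''. Since the equilibrium OTOC value is a positive constant (the chain restricted to configurations with at least one ``N'' is irreducible, and in its stationary distribution $y$ carries ``N'' with probability bounded away from $0$), it suffices to exhibit a constant $c>0$ depending only on the degree bound $z=O(1)$ such that $\Pr[\tau_y\le c\,D(x,y)]\le e^{-\Omega(D(x,y))}$. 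Then for $D(x,y)$ larger than an absolute constant the OTOC stays below the saturation threshold $\tfrac{1}{d^2+1}$ times its equilibrium value for all $t\le c\,D(x,y)$, so $\tau_{\OTOC}^{(x,y)}=\Omega(D(x,y))$; the bound is trivial when $D(x,y)=O(1)$.

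First I would establish a structural propagation fact in the discrete model in which one uniformly random edge is updated per step: if $v$ is labeled ``N'' at the end of step $T$, then there is a walk $x=v_0,v_1,\dots,v_k=v$ in $G$ and steps $1\le s_1<s_2<\cdots<s_k\le T$ such that the edge $(v_{i-1},v_i)$ is the one selected at step $s_i$. This follows by induction on $T$: a site flips from ``I'' to ``N'' only when it is an endpoint of the selected edge and the other endpoint is currently ``N'' (an ``II'' edge stays ``II''), so one appends the latest such infection event to the walk obtained for the other endpoint at an earlier step. Crucially, the label-erasing transitions (``NI''$\to$``IN'', ``NN''$\to$``NI'', etc.) cause no difficulty, since we only ever track the most recent time a site became ``N''. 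Hence $\Pr[y\text{ labeled ``N'' by step }T]$ is at most the probability that some such ordered edge-walk from $x$ to $y$ is realized within $T$ steps. Union-bounding: there are at most $z^k$ walks of length $k$ from $x$, at most $\binom{T}{k}$ choices of the steps $s_1<\cdots<s_k$, and a fixed choice is realized with probability exactly $E^{-k}$; since every $x$--$y$ walk has length $k\ge D(x,y)$,
\[
\Pr[y\text{ ``N'' by step }T]\ \le\ \sum_{k\ge D(x,y)}z^k\binom{T}{k}E^{-k}\ \le\ \sum_{k\ge D(x,y)}\Big(\tfrac{ezT}{kE}\Big)^{k}.
\]
Choosing $T\le E\,D(x,y)/(2ez)$ makes the $k$-th summand at most $(D(x,y)/2k)^k\le 2^{-k}$, so the whole sum is at most $2^{1-D(x,y)}$.

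Finally I would return to continuous time. By the equivalence of the two models and the Poisson concentration recorded in Appendix~A, the number of gates applied by time $t$ is, with probability $1-e^{-\Omega(Et)}$, at most $2Et$ (using $E\ge D(x,y)$, this failure probability is $e^{-\Omega(D(x,y))}$); picking $c$ with $4ezc\le 1$ forces the step count by time $c\,D(x,y)$ below $E\,D(x,y)/(2ez)$ on this event, and combining with the previous paragraph gives $\Pr[\tau_y\le c\,D(x,y)]\le 2^{1-D(x,y)}+e^{-\Omega(D(x,y))}=e^{-\Omega(D(x,y))}$, as desired. The step I expect to require the most care is the structural propagation lemma in the presence of label-erasing transitions, together with keeping the constant $c$ independent of the graph: this is exactly where the hypothesis that the degree is $O(1)$ (rather than merely $\le d^2$, as in \thmref{otoc-mixing}) is used, since the walk count $z^k$ must be absorbed by the $E^{-k}$ factor uniformly in $k$. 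The conversion between step count and elapsed time, and the verification that the equilibrium OTOC is a positive constant, are routine.
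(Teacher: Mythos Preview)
Your proposal is correct and is the natural light-cone (Lieb--Robinson-type) argument for this setting: dominate $M_0$ by the monotone pure-growth process, record the structural fact that any ``N'' at $y$ must be witnessed by a time-increasing edge-walk from $x$ to $y$, and then union-bound over all walks of length $k\ge D(x,y)$ using the count $z^k\binom{T}{k}E^{-k}$. The conversion from step count to continuous time via Poisson concentration and the observation that the stationary probability of ``N'' at $y$ is bounded away from zero are both routine, as you say. This matches the paper's approach in Appendix~F.
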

The proof is presented in Appendix~F.

\section{Entanglement}
\label{sec:ent}
Here we only need to consider the case where the evolution is unitary and the system is pure.

Entanglement entropy of pure state $|\psi\>_{AB}$ is given by $E(|\psi \>) := S(\rho_A )$ where $\rho_A = \Tr_B[ |\psi \>\<\psi |]$ and $S$ is
the von Neumann entropy.
Notice the following simple, general fact:

\begin{lemma}
\label{lem:ent}
 Let $U_{AB}$ be a unitary operator acting on two $d$-dimensional systems $AB$. Then for any $|\psi\>_{AA'BB'}$ with ancilla systems $A' $, $B' $,
\[
 E((U_{AB}\otimes \operatorname{id}_{A'B'})|\psi\>_{AA'BB'}) - E(|\psi\>_{AA'BB'}) \le 2\log d.
\]

\end{lemma}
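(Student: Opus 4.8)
The plan is to write the entanglement as $E(|\psi\>_{AA'BB'}) = S(\rho_{AA'})$ and to control how much this von Neumann entropy can move when $|\psi\>$ is acted on by $U_{AB}\otimes\operatorname{id}_{A'B'}$. The structural fact that makes this cheap is that a unitary supported on $AB$ leaves untouched every marginal that avoids both $A$ and $B$: conjugating $|\psi\>\<\psi|$ by $U_{AB}\otimes\operatorname{id}_{A'B'}$ and then tracing out $AB$ returns $\rho_{A'B'}$ unchanged, by cyclicity of the partial trace over $AB$. In particular the reduced state $\rho_{A'}$ is invariant under the evolution, so $S(\rho_{A'})$ is the same before and after. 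I will write $\rho'$ for the reduced density operators of the evolved state $(U_{AB}\otimes\operatorname{id}_{A'B'})|\psi\>$.

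First I would invoke the Araki--Lieb triangle inequality in the form $|S(XY)-S(Y)|\le S(X)$, applied to the splitting of $AA'$ into $A$ and $A'$. This gives $|S(\rho_{AA'})-S(\rho_{A'})|\le S(\rho_A)\le \log d$, the last step being the maximal-entropy bound for the $d$-dimensional system $A$; the same inequality applied to the evolved state gives $|S(\rho'_{AA'})-S(\rho'_{A'})|\le S(\rho'_A)\le\log d$.

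The argument then closes immediately: since $S(\rho'_{A'})=S(\rho_{A'})$, the difference $S(\rho'_{AA'})-S(\rho_{AA'})$ equals $\bigl(S(\rho'_{AA'})-S(\rho'_{A'})\bigr)+\bigl(S(\rho_{A'})-S(\rho_{AA'})\bigr)$, and each of the two brackets has absolute value at most $\log d$ by the previous step, so the whole difference is at most $2\log d$. There is no real obstacle here---one only has to be careful about identifying which marginals a unitary on $AB$ fixes (one could equally split the bound into subadditivity plus Araki--Lieb, but the triangle inequality packages both cleanly). It is worth noting that the constant is tight: with $A,B$ both $d$-dimensional, $A'$ a copy of $A$, $B'$ a copy of $B$, and $|\psi\>=|\Phi\>_{AA'}\otimes|\Phi\>_{BB'}$ with $|\Phi\>$ maximally entangled, one has $E(|\psi\>)=0$ while the choice $U_{AB}=\mathrm{SWAP}_{AB}$ produces $E=2\log d$.
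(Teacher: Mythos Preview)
Your proof is correct and takes a genuinely different route from the paper's. The paper argues operationally: Alice and Bob share two copies of the maximally entangled state $|\Phi_d\rangle$, Alice teleports $A$ to Bob (consuming one copy), Bob applies $U_{AB}$ locally, and Bob teleports $A$ back (consuming the other copy); since the whole protocol is LOCC it cannot increase the entanglement across $AA'|BB'$, so the gain is bounded by the $2\log d$ ebits consumed. Your argument is instead purely entropic: you fix the invariant quantity $S(\rho_{A'})$ under any unitary supported on $AB$, and then sandwich $S(\rho_{AA'})$ and $S(\rho'_{AA'})$ against it using $|S(AA')-S(A')|\le S(A)\le\log d$. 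This is more elementary---no teleportation, no LOCC monotonicity---and gives the two-sided bound $|E'-E|\le 2\log d$ for free. The paper's approach, on the other hand, transparently generalizes to any entanglement measure that is additive on tensor products and nonincreasing under LOCC, whereas your route relies on subadditivity and Araki--Lieb and so is tied more specifically to the von Neumann entropy.
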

\begin{proof}
  Adapted from the proof of Lemma 1 of \cite{BHLS03}. Suppose Alice holds $AA'$ and Bob holds $BB'$. In addition, they share two copies of the maximally entangled state $|\Phi_d\>  =
\frac{1}{\sqrt d} \sum_{i=1}^d|i\>|i\>$, $E(|\Phi_d \>) = \log d$. Consider the following double teleportation protocol. Alice consumes a $|\Phi_d\>$ and classical communication to teleport A to Bob, who performs $U$ locally and then consumes a $|\Phi_d\>$ and classical communication to teleport system $A$ back to Alice. The protocol is LOCC, under which the
entanglement entropy between Alice and Bob is monotonically nonincreasing. Therefore, by the additivity of $S$ (and thus $E$) on tensor products,
\begin{align*}
  &E(|\psi\>_{AA'BB'}) +2E(|\Phi_d\>) \ge  E((U_{AB}\otimes \operatorname{id}_{A'B'})|\psi\>_{AA'BB'}),
\end{align*}
and so the claimed bound follows.
\end{proof}

Note that the proof also applies to e.g.~the R\'enyi-2 entropy, which is a variant of the entanglement entropy that can be more easily measured in experiments \cite{Islam2015,2018arXiv180605747B}.   

By Lemma~\ref{lem:ent}, the entanglement entropy between the two trees increases by at
most $2\log d$ when the random unitary is acted across the middle edge. This edge only has
a probability of $1/E \sim 1/V$ to be selected in each step. So in order to reach the
maximum entropy of order $V\log d$, we need at least an order of $V^2$ steps or equivalently order $V$
time.  This is much larger than the OTOC time of order $\log V$.

From Lemma~\ref{lem:ent} we can get the following result for a general graph.
\begin{cor}\label{cor:ent}
 For a general graph $G$ with vertices partitioned into sets $A$ and $B$, the expected entanglement saturation time is at least of order $ \frac{\min\{|A|,|B|\}}{C(A,B)}$, where ${C(A,B)}$ is the number of edges with one endpoint in $A$ and one in $B$.
\end{cor}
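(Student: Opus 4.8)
The plan is to track the entanglement entropy $S(\rho_A)$ across the cut $(A,B)$ as a function of the number of elementary steps of the edge-picking process, and to observe that it can only grow when a gate lands on a \emph{cut edge}. Concretely, a gate applied to an edge with both endpoints in $A$ (or both in $B$) is a local unitary on Alice's (resp.\ Bob's) side and therefore leaves $S(\rho_A)$ exactly invariant; by Lemma~\ref{lem:ent}, a gate applied to one of the $C(A,B)$ cut edges increases $S(\rho_A)$ by at most $2\log d$. Hence if, after $T$ steps, exactly $k$ of the gates have been applied to cut edges, then $S(\rho_A) \le 2k\log d$, regardless of what happened on the non-cut edges. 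This reduces the problem to lower-bounding the number of cut-edge gates needed to reach the saturation threshold.

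Next I would pin down that the equilibrium value of the entanglement entropy is $\Theta(\min\{|A|,|B|\}\log d)$: the upper bound $S(\rho_A)=S(\rho_B)\le \min\{|A|,|B|\}\log d$ is immediate, and the matching lower bound follows from Page's formula applied to the Haar-like steady state of the random circuit (the expected entropy of $\rho_A$ is $\min\{|A|,|B|\}\log d - O(1)$). Consequently the threshold $\frac{1}{d^2+1}$ of the equilibrium value is still $\Theta(\min\{|A|,|B|\}\log d)$, so by the previous paragraph reaching it requires at least $k = \Omega(\min\{|A|,|B|\})$ gates to have been applied across cut edges.

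Finally I would convert ``number of cut-edge gates'' into ``expected number of steps,'' and then into time. In the edge-picking process each step independently selects a cut edge with probability $p = C(A,B)/E$. Let $T$ be the first step at which $S(\rho_A)$ reaches the threshold; $T$ is a stopping time with $\E[T]<\infty$ since the chain mixes. The number of cut-edge gates among the first $T$ steps is $\Omega(\min\{|A|,|B|\})$, so Wald's identity gives $p\,\E[T] \ge \Omega(\min\{|A|,|B|\})$, i.e.\ $\E[T] \ge \Omega\!\big(\min\{|A|,|B|\}\,E/C(A,B)\big)$. Since $E$ steps occur per unit time, the expected saturation time $\E[\tau_{\ent}^{(A)}]$ is $\Omega\!\big(\min\{|A|,|B|\}/C(A,B)\big)$, as claimed. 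The bottleneck case discussed before the corollary is recovered by taking $A$, $B$ to be the two halves of the tree, so $C(A,B)=1$ and the bound is $\Omega(V)$.

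I expect the genuinely delicate point to be the second step: one must be careful that the ``equilibrium value'' in the definition of $\tau_{\ent}^{(A)}$ really is $\Theta(\min\{|A|,|B|\}\log d)$ and not something smaller, which requires invoking the known Haar-random behavior of the circuit limit (Page's formula) rather than just a crude dimension count. The first and third steps are essentially bookkeeping — the monotonicity/invariance facts come directly from Lemma~\ref{lem:ent} and the locality of the gates, and the stopping-time argument is standard — so the corollary should follow with only routine effort once the equilibrium value is nailed down.
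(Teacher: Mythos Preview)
Your proposal is correct and follows essentially the same route as the paper: use \lemref{ent} to bound the entropy gain per cut-edge gate by $2\log d$, note that non-cut gates are local unitaries and leave $S(\rho_A)$ invariant, and combine with the probability $C(A,B)/E$ of hitting a cut edge per step to get the $\Omega(\min\{|A|,|B|\}/C(A,B))$ time bound. Your Wald/stopping-time formulation and the explicit invocation of Page's formula for the equilibrium value are more careful than the paper's brief sketch, but the underlying argument is identical.
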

This also lower-bounds the time it takes for the random circuit to converge to 2-designs \cite{HQRY16,PhysRevLett.120.130502}.

\section{OTOC vs.~entanglement in different models}
\label{sec:models}

 
According to our results, a simple graph that can give a separation of OTOC and entanglement saturation times is a perfect binary tree, as depicted in Fig.~\ref{fig:tree}. Here we consider the OTOC of operators located on the pair of farthest vertices in the graph (a leaf vertex of left subtree and a leaf vertex of the right subtree), and the entanglement entropy between left and right subtrees.
The cell structure roughly equivalent to the hyperbolic geometry in 3 dimensions, or indeed any
constant number of dimensions, exhibits such a separation as well. These graphs are regarded as toy models of quantum information scrambling around black holes \cite{Shor18}, as motivated in the introduction; see also Appendix B. 

The behaviors of OTOC and entanglement on some other graphs are also studied.  
These include: i) The ``dumbbell graph'' consisting of two complete graphs connected by a bottleneck edge. A careful analysis could still show a separation between OTOC and entanglement; ii) High degree graphs. We demonstrate two examples in which the OTOC saturation time can be much longer or shorter than the bound we have on low degree graphs. See Appendices~E, G for more details.
Also note that there is no separation on Euclidean lattices.
These results are summarized in Table~\ref{tab:compare}.


\begin{table}[ht]
\caption{\label{tab:compare}Comparison of the time scales of OTOC and entanglement saturation in various graphs. $n$ is the total number of vertices in the graph.}
\begin{ruledtabular}
\begin{tabular}{lll}
Models/graphs & OTOC & Entanglement \\ \hline
Euclidean lattices in $D$ dimensions & $n^{1/D}$ & $n^{1/D}$\\
Hyperbolic space with $D=3$& $\log{n}$ & $\sqrt{n}$ \footnote{See Section 4 of \cite{Shor18}.} \\
Binary tree & $\log n$ & $n$\\
Tree with degree $z\gg d^2$ & $n^{1-\frac{\log d^2}{\log z}}$ & $n/z$ \\
Dumbbell graph & $\log n/n $ & $n$
\end{tabular}
\end{ruledtabular}
\end{table}

Other than the Poisson process on each edge, different orders of choosing the edges have also been studied in Appendix~G.

\begin{figure}[ht]
 \includegraphics[width=0.25\textwidth]{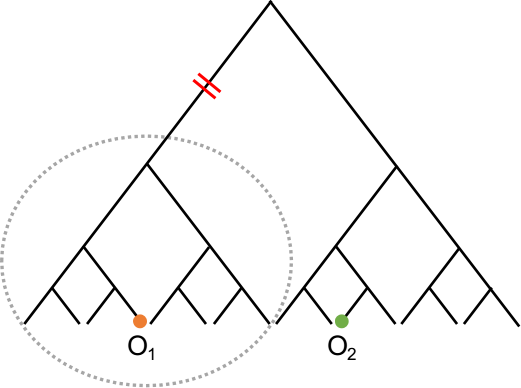}
 \caption{An illustration of the binary tree model of depth 4. We consider OTOC between local operators originally acting on two farthest vertices (a leaf of the left and right subtrees respectively, for example, $O_1$ and $O_2$ in the diagram), and entanglement between the left subtree (dashed circle) and the rest of the graph (the cut shown by the red double line).  \label{fig:tree}}
\end{figure}

\section{Conclusion}

Random quantum circuits have widespread applications in quantum information, and
are also very important models of scrambling and chaotic quantum systems in
theoretical physics. There are several ways to characterize scrambling and
randomness in quantum processes, among which the OTOC and entanglement are two
important types of measures.  This work aims to understand whether they are
equivalent to each other as the signature of scrambling.    To this end, we
carefully analyze local random quantum circuits defined on, e.g., a binary tree, which
exhibit the property that OTOC mixes rather fast since the light cone can
quickly reach the far end (time of order $\ln V$), while it takes a much longer time
for entanglement between the left and right subtrees to grow (time at least of order $V$).  
We furthermore generalize the result to any bounded-degree graph with a tight
bottleneck.  That is, the generation of entanglement is slow, even if the graph
has small diameter.  Our result indicates that unitary $t$-designs can be much
more expensive than we thought: They require a random quantum circuit to have
depth much larger than the diameter of the underlying graph.   This result
provides a more rigorous evidence for arguments made in \cite{Shor18}:  if we
consider the model discussed in \cite{Shor18, SS08}, then the scrambling of
quantum information as seen by strong measures such as entanglement or
decoupling can be much slower than we thought before.
It would be interesting to explore further implications of this phenomenon, and more generally, quantum information processing, to the black hole information problem, many-body physics, and beyond.

\begin{acknowledgments}
AWH was funded by NSF grants CCF-1452616, CCF-1729369, and PHY-1818914; ARO contract
W911NF-17-1-0433; and the MIT-IBM Watson AI Lab under the project Machine Learning in
Hilbert space. ZWL is supported by Perimeter Institute for Theoretical Physics. Research
at Perimeter Institute is supported by the Government of Canada through Industry Canada
and by the Province of Ontario through the Ministry of Research and Innovation.  LK was
funded by NSF grant CCF-1452616. SM was funded by NSF grant CCF-1729369. PWS is supported
by the National Science Foundation under grants No. CCF-1525130 and  CCF-1729369 and through the NSF Science and Technology Center for Science of Information under Grant No. CCF-0939370.
\end{acknowledgments}

\bibliographystyle{unsrt}

\foreach \x in {1,...,8}
{%
\clearpage
\includepdf[pages={\x,{}}]{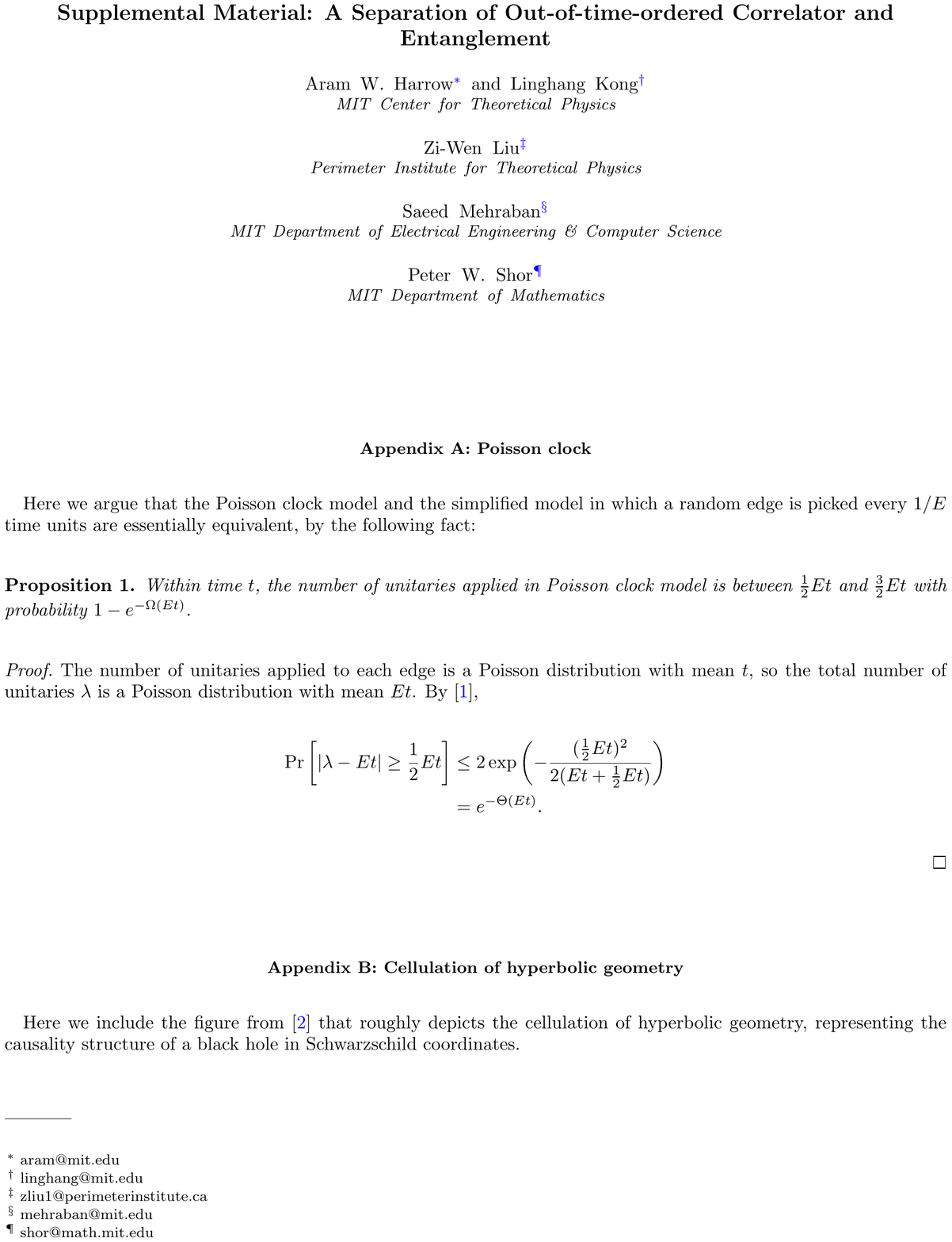}
}

\end{document}